\documentclass[11pt]{article}
\usepackage[english]{babel}

\setlength{\voffset}{-.75truein}
\setlength{\textheight}{9truein}
\setlength{\textwidth}{6.5truein}
\setlength{\hoffset}{-.7truein}

\newcommand{\version}{}

\usepackage{amsthm,amsfonts,amsmath}
\usepackage{graphics
,
}
\swapnumbers
                                %
                                %

\pagestyle{myheadings}

                                %
\theoremstyle{plain}

\newtheorem{thm}{THEOREM}[section]
\newtheorem{lm}[thm]{LEMMA}

\newtheorem{remark}[thm]{Remark}

\theoremstyle{definition}
\newtheorem{defi}[thm]{DEFINITION}
\theoremstyle{remark}
\newcommand{\upchi}{\raise1pt\hbox{$\chi$}}
\newcommand{\R}{{\mathord{\mathbb R}}}

\newcommand{\Z}{{\mathord{\mathbb Z}}}

\newcommand{\dd}{{\rm d}}

\renewcommand{\|}{{\Vert}}
\numberwithin{equation}{section}
\pagestyle{myheadings} \sloppy
\let \e=\varepsilon                                                                

\begin{document}
\markboth{\scriptsize{CEM \version}}{\scriptsize{CEM \version}}

\title{Analysis of Droplets in \\ Lattice Systems with Long-range Kac Potentials}
\author{\vspace{5pt} E. A.  Carlen$^1$, R. Esposito$^2$ J.L. Lebowitz$^3$ and R. Marra$^4$\\}

\date{\version}
\maketitle

\def\O{\Omega}

\footnotetext                                                                         
[1]{ Department of Mathematics, Rutgers University, 
Piscataway, NJ
08854, U.S.A. Work partially
supported by U.S. National Science Foundation
grant DMS 06-00037.  }
\footnotetext 
[2]{Dipartimento di
Matematica, Universit\`a dell'Aquila, Coppito, 67100 AQ, Italy}
\footnotetext                                                                         
[3]{ Departments of Mathematics and Physics, Rutgers University, 
Piscataway, NJ
08854, U.S.A. Work partially
supported by U.S. National Science Foundation
grant DMR  08-02120 and AFOSR Grant FA 9550-10-1-0131 }

\footnotetext 
[4]{Dipartimento di Fisica and Unit\`a INFN, Universit\`a di
Roma Tor Vergata, 00133 Roma, Italy. \\
\copyright\, 2005 by the authors. This paper may be reproduced, in its
entirety, for non-commercial purposes.}                  

\maketitle

\begin{abstract}
We investigate the geometry of typical equilibrium configurations for a lattice gas in a finite macroscopic domain with attractive, long range Kac potentials. We focus on the case when the system is below the critical temperature  and has  a fixed number of occupied sites.We connect the properties of  typical configurations to the analysis of the constrained minimizers of a mesoscopic non-local free energy functional, which we prove to be  the large deviation functional for a density profile in the canonical Gibbs measure with prescribed global density.
In the case in which the global density  of occupied sites lies between the two equilibrium densities
that one would have without a constraint on the particle number, a ``droplet'' of the high (low) density phase may or may not form in a background of the low (high) density phase. We determine the critical density for droplet formation, and the nature of the droplet, as a function of the temperature and the size of the system, by  combining the present large deviation principle with the analysis of the mesoscopic functional given in \cite{CCELM0}.%
\end{abstract}

\section{Introduction}\label{intro0}
The mathematical study of the behavior of a lattice gas of particles (or spins) interacting via a long range Kac potential, both in equilibrium and in non-equilibrium, has been the subject of many works:  a recent book  \cite{P}  provides   a comprehensive treatment of the subject as it has developed so far.

The  long-range Kac potential  introduces a third length scale between the microscopic scale  of the lattice spacing and the macroscopic scale  of the size of
the domain.  This third scale is  referred to as the mesoscopic scale.  As we show here, one can determine the geometric nature of ``typical''
microscopic particle configurations for such systems through the analysis of a mesoscopic free energy functional that serves as a large deviations functional for the system. We do this in a scaling regime that is critical for droplet formation in these models.  

The large deviations functional (LDF) with which we work  is a functional of ``course grained density profiles'',    different from the
LDF used to study the corresponding problem in two dimensional nearest-neighbor Ising systems:  There, the LDF is a function of ``contours''  associated to the microscopic configuration on a two dimensional lattice.  The mesoscopic analysis that we carry out here goes through in any dimension. 
Before describing our results, we first describe the models with which we work more precisely.

\subsection{The grand canonical and canonical measures for the model}

 Let ${\cal T}_L$ be the $d$--dimensional square torus with side length $L$. 
Let 
$\Lambda_{L,\gamma}$  denote the part of the lattice $\gamma \Z^d$ contained in ${\cal T}_L$.  For the purpose of connecting the microscopic and mesoscopic scales, it will be convenient, as in \cite{ABCP,P}, to regard
$\Lambda_{L,\gamma}$ as a subset of  ${\cal T}_L$. 
A {\it particle  configuration},  is a function $s$ 
from $\Lambda_{L,\gamma}$ to $\{-1,1\}$. The  site $x(i)=\gamma \, i\in \Lambda_{L,\gamma} \subset \mathcal{T}_L$,  $i\in \mathbb{Z}^d$  is {\it occupied} by a particle 
if $s(x(i)) =1$, and is {\it unoccupied}
if $s(x(i)) = -1$. 

The Hamiltonian $H_{\gamma,L}$ for the system, giving the total interaction energy of a configuration, is
\begin{equation}\label{hamdef}
 H_{\gamma,L}(s) = 
-\frac{\gamma^d}{2} \sum_{x(i),x(j)\in  \Lambda_{L,\gamma}}J(|x(i)-x(j)|)s(x(i))s(x(j))\ .
\end{equation}
where $J$ is a
a non negative smooth function $J$ on $\R_+$ that is bounded, continuous,  supported by $[0,1]$, and strictly monotone decreasing on $[0,1]$,
where $|x(i)-x(j)|$ denotes the distance separating $x(i)$ and $x(j)$ in the torus. 
We assume the normalization
\begin{equation}\label{aldef}
\int_{\R^d}J(|r|){\rm d}r = 1\ .
\end{equation}  
This function $J$ is the {\it interaction potential}. Since its range is of order $\gamma^{-1}$ in microscopic units, $ H_{\gamma,L}$ is a {\em local mean field} Hamiltonian:
A particle at site $x(i)=\gamma \/\/i \in \Lambda_{L,\gamma}$ interacts with a local mean field of neighboring particles:
 \begin{equation}\label{intdef}
 \gamma^{d}\sum_{\gamma \/\/j\in \Lambda_{L,\gamma}}J(|x(i)-x(j)|)s(x(j))\ .
 \end{equation}

Let $\Omega_{L,\gamma}=\{-1,1\}^{\Lambda_{L,\gamma}}$ denote the set of all particle configurations. Also, for any $0 \leq N \leq {|\Lambda|}$, where
$|\Lambda|$ denotes the number of sites in $\Lambda$, define 
$$\Omega_{L,\gamma,N}:= \{\  s\in \Omega_{L,j}\ :\  \sum_{x\in \Lambda_{L,j}}(s(x)+1)/2 = N\ \}\ .$$
Then $\Omega_{L,\gamma,N}$ is the space of $N$-particle configurations.  

Given an inverse temperature $\beta$,  the grand canonical Gibbs measure $P_{\rm gc}$ on $\Omega_{L,\gamma}$ is defined by
\begin{equation}
P_{\rm gc}(\{s\})=\frac{1}{Z_{\rm gc}} \exp[-\beta H(s)] \qquad{\rm and}\qquad Z_{\rm gc} = \sum_{s\in \Omega} \exp[-\beta H(s)]\ ,
\end{equation}\label{grcan}
and the canonical Gibbs measure $P_{\rm can}$ on $\Omega_{L,\gamma,N}$ is defined by
\begin{equation}\label{can1}
P_{{\rm can},N}(\{s\})=\frac{1}{ Z_{{\rm can},N} } \exp[-\beta H(s)] \qquad{\rm and}\qquad Z_{{\rm can},N} = \sum_{s\in \Omega_N} \exp[-\beta H(s)]\ ,
\end{equation}
where we have dropped the subscripts from $H_{\gamma,L}$. 

As is well known, our system undergoes a phase transition when $\gamma$ goes to zero  at $\beta = 1$. 
In particular, the nature of the microscopic configurations that are typical under the Gibbs measure
changes at the phase transition. To see this change clearly, 
it is convenient to introduce the notion of  {\it corse grained} configurations, $\sigma_\delta(r)$. We shall give a precise definition in the next subsection. For the moment it suffices to say that $\sigma_\delta(r)$ is obtained by averaging the microscopic configuration $s$ on a box of size $\gamma^\delta$ centered at $r$, $\delta<1$ so $\gamma^\delta\gg 1$ for $\gamma\ll 1$.  

For $\beta>1$, $\gamma$ small and $L$ large, and say $\delta = 1/2$,  the grand canonical probability measure is overwhelmingly concentrated on coarse grained configurations $\sigma_\delta$ for which either 
$\sigma_\delta(r)$ is very close to $m_\beta$ at most $r\in {\mathcal T}_L$ or else  $\sigma_\delta(r)$ is very close to $-m_\beta$ at most $r\in {\mathcal T}_L$, where
$m_\beta$ is the unique positive solution to 
\begin{equation}\label{mbdif}
m_\beta = {\rm tanh}(\beta m_\beta)\ .
\end{equation}
If $\sigma_\delta(r) \approx m_\beta$ we say that the system is in the high density, or ``liquid'' phase at $r$, and if 
$\sigma_\delta(r) \approx -m_\beta$ we say that the system is in the low density, or ``vapor'' phase at $r$.

 Things are more interesting under the canonical measure,  under  which the global  particle density has the sharp value $n= N/|\Lambda|$. If
 $$-m_\beta  < 2n-1 < m_\beta\ ,$$
 then it is not possible for the system to be in one phase or the other over all of ${\mathcal T}_L$. Instead, as one might expect,  typical configurations  will be such that 
 $\sigma_\delta(r) \approx +m_\beta$ in some part $D$ of ${\mathcal T}_L$, while
 $\sigma_\delta(r) \approx -m_\beta$ in most of the rest of ${\mathcal T}_L$.   If $n$ is much closer to $\frac{1-m_\beta}2$ than it is to $\frac{1+m_\beta}{2}$, we would expect the vapor state to dominate, so that
 $D$ will cover only a small part of the whole domain.  In such a configuration, we say there is a {\em droplet} of the liquid phase  in a background of the vapor phase. 
 The basic question that concerns us here is this:
 
 \medskip
 \noindent{$\bullet$} {\em For a given $n$, $\beta>1$, small $\gamma$  and large $\Lambda$, what are the sizes and shapes of droplets for typical configurations under $P_{{\rm can},N}$?}
 \medskip

To answer this question we first have to define precisely the coarse graining we will use.

\subsection{The coarse--graining transformation} \label{intro1c}

It will be convenient to assume, as in \cite{ABCP,P} that $\gamma = 2^{-k}$ for some positive integer $k$, and that $L$ is an integer. 
As in  \cite{ABCP,P}, we
regard particle configurations as functions on ${\cal T}_L$, and not only on the lattice 
$ \Lambda_{L,\gamma}$. However, to keep clear which variables range over ${\cal T}_L$, and which range over  $ \Lambda_{L,\gamma}$, we
use their convention of writing
$r$ to denote a continuous variable in ${\cal T}_L$, and $x$ to denote the discrete variables
on $\Lambda_{L,\gamma}$. 

The lattice induces a partition  $\mathcal{Q}^{(k)}$ of  ${\cal T}_L$ into cubes $Q(x)$, $x\in \Lambda_{L,\gamma}$: 
For any $x\in  \Lambda_{L,\gamma}$,
$$Q(x) = \{ r\in {\cal T}_L\ :\  x_j < r_j \leq x_j + \gamma\ , j=1,\dots\ d\}\ $$
We then extend the domain of each $s\in \Omega_{L,\gamma}$ from  $\Lambda_{L,\gamma}$ to  ${\cal T}_L$
by putting
$$s(r) = s(x) \qquad{\rm for\ all} \quad r \in Q(x)\ .$$

We are now ready to introduce the coarse-graining transformation:  Fix any positive integer $\ell$ with $\ell < k$.   Then the cubic partition
$\mathcal{Q}^{(\ell)}$ is coarser than  $\mathcal{Q}^{(k)}$:  Each cube in $\mathcal{Q}^{(\ell)}$ is the union of $2^{k-\ell}$ cubes in 
$\mathcal{Q}^{(k)}$.    The cubes in the partition  $\mathcal{Q}^{(\ell)}$ have side-length $\gamma^{\ell/k}$, and it is therefore
useful to introduce the parameter
$$\delta = \frac{\ell}{k}\ .$$

\begin{defi}[Coarse graining transformation on scale $\delta$]  Let $f$ be any integrable function on  ${\cal T}_L$, and let $\delta = \ell/k$
for some integers $0 < \ell < k$. The coarse grained projection of $f$ on scale $\delta$ is the function on  ${\cal T}_L$ given by
\begin{equation}\label{coarse}
\pi^{(\delta)}f(x)=\frac 1{|C(x)|}\int_{C(x)}f(y)d\/y,
\end{equation}
where $C(x)$ is the unique cube in  $\mathcal{Q}^{(\ell)}$ that  contains $x$.  In other words \cite{P},   $\pi^{(\delta)}f$ is the conditional expectation of 
$f$ given the $\sigma$-algebra generated by  $\mathcal{Q}^{(\ell)}$. 
\end{defi}

Since we consider particle configurations in $\Omega_{L,\gamma}$ as functions, necessarily integrable,  on  ${\cal T}_L$, the coarse graining
transformation may be applied to each $s\in  \Omega_{L,\gamma}$.

\begin{defi}[The coarse grained configuration spaces] For any $0 < \ell < k$, and with $\delta = \ell/k$, 
let $\Omega_{L,\gamma}^{(\delta)}$ be the set of  functions $\sigma_{\delta}=\pi^{(\delta)}\sigma$ for some $\sigma\in \Omega_{L,\gamma}$. 
That is, $\Omega_{L,\gamma}^{(\delta)}$   is the image of $\Omega_{L,\gamma}$ under $\pi^{(\delta)}$. Likewise, define
 $\Omega_{L,\gamma,N}^{(\delta)}$   is the image of $\Omega_{L,\gamma,N}$ under $\pi^{(\delta)}$
\end{defi}

Note that the elements of   $\Omega_{L,\gamma,N}^{(\delta)}$ are not only constant on each cube $\mathcal{Q}^{(\ell)}$, but they can only assume a finite, discrete set of values: for all $r$, 
$$\sigma_{\delta}(r) \in \{m \gamma^\delta\ ,\ m =0,1\dots,  \gamma^{-\delta}\ \}\ .$$

For each given $\sigma_{\delta}\in \Omega_{L,\gamma}^{(\delta)}$ we consider the event
$$E(\sigma_{\delta})=\{s \in \Omega_{L,\gamma} \,|\, \pi^{(\delta)}s=\sigma_{\delta}\}.$$
Following \cite{P}, we define
$$Z(\sigma_\delta) = \sum_{s\in  E(\sigma_\delta)} e^{-\beta H(s)}\ .$$

The grand canonical probability of  $E(\sigma_{\delta})$ is given by
\begin{equation}\label{ratio1}
P_{\rm gc}[E(\sigma_\delta)] = \frac{Z(\sigma_\delta)}{Z_{{\rm gc}}}\ .
\end{equation}
Furthermore, provided
${\displaystyle \gamma^{-d}\int_{{\mathcal T}_L}\sigma_\delta(r)\dd r  = N}$ so that $ E(\sigma_\delta)\subset \Omega_{L,\gamma,N}$,
and hence so that $P_{{\rm can},N}[E(\sigma_\delta)]$ is defined,
\begin{equation}\label{ratio2}
P_{{\rm can},N}[E(\sigma_\delta)] = \frac{Z(\sigma_\delta)}{Z_{{\rm can},N}}\ .
\end{equation}

\subsection{The free energy functional}

The key to the solution of the droplet problem is to relate the probabilities $P_{\rm gc}[E(\sigma_\delta)]$ and $P_{\rm can,N}[E(\sigma_\delta)]$
to a mesoscopic free energy functional of $\sigma_\delta$. 

\begin{defi}[The GPL free energy functional] Let $\mathcal{M}$ be the set of  measurable function $\sigma$ mapping $\mathcal{T}_L$ into $[-1,1]$.
 For any $\sigma\in \mathcal{M}$, The Gates-Penrose-Lebowitz free energy of $\sigma$,   ${\mathcal F}(\sigma)$  is defined   
by
\begin{eqnarray}\label{free-energy-funct}
\mathcal{F}(\sigma)
&=&\frac{1}{\beta}\int_{\mathcal{T}_L} i(\sigma(x))\dd x - \frac {1}{2}  \int_{\mathcal{T}_L}\int_{\mathcal{T}_L} J(|x-y|)\sigma(x)\sigma(y)\dd x\dd y\nonumber\\
&=&\int_{\mathcal{T}_L} f(\sigma(x))\dd x + \frac {1}{4}  \int_{\mathcal{T}_L}\int_{\mathcal{T}_L} J(|x-y|)[\sigma(x)-\sigma(y)]^2\dd x\dd y\ 
\end{eqnarray}
where
$$f(m) :=  \frac{1}{\beta}i(m) -\frac{1}{2}m^2\ ,$$
and
$$i(m) 
=\frac{1-m}{2}\log\frac{1-m}{2}+\frac{1+m}{2}\log\frac{1+m}{2}\ .$$
\end{defi}

We shall prove the following theorem:

\begin{thm}\label{propos1}  For any $-1 \leq m \leq 1$, let $N = [\frac{m+1}2 \gamma^{-d}L^d]$, recalling that
$|\Lambda_{L,\gamma}| = \gamma^{-d}L^d$ is the total number of lattice sites.

Let $\sigma^\star$ be a  constrained minimizer of $\mathcal{F}$:
\begin{equation}\label{conmin}\frac 1{L^d}\int_{\mathcal{T}_L} \sigma^\star(r)\dd r=m  \quad{\rm and}\quad 
{\mathcal F}(\sigma^\star) =  \inf\left\{ {\mathcal F}(\sigma)\ :\  \frac 1{L^d}\int_{\mathcal{T}_L} \sigma(r)\dd r=m \right\}\ . 
\end{equation}
Then there is $\delta>0$ such that
\begin{equation}
| \log P_{{\rm can},N} [E(\sigma_{\delta})| +\beta\gamma^{-d}[\mathcal{F}(\sigma_{\delta})-\mathcal{F}(\sigma^\star)]|\le c\gamma^{-d}L^{d}  [\gamma^\delta +\gamma^{d(1-\delta)}]\log\gamma^{-1},
\end{equation}
Moreover,  let $\mathcal{A}$  be any set of coarse-grained configurations in $\Omega_{L,\gamma,N}^\delta$. Then 
$$\gamma^{d} \log P_{{\rm can},N}(\mathcal{A})= -\inf_{\sigma\in \mathcal{A}}\beta\Big[\mathcal{F}(\sigma)-\mathcal{F}(\sigma^\star)\Big] + L^{d}\mathcal{O}(\gamma^{\bar\delta}).
$$
\end{thm}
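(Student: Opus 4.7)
The plan is to estimate $\log Z(\sigma_\delta)$ and $\log Z_{{\rm can},N}$ separately with sharp errors, and then apply (\ref{ratio2}). The core calculation concerns $Z(\sigma_\delta)$. Each $s\in E(\sigma_\delta)$ is parametrized by specifying, on every cube $C\in\mathcal{Q}^{(\ell)}$, which of the $\gamma^{-d(1-\delta)}$ lattice sites are occupied, subject only to the local magnetization being $\sigma_\delta|_C$; the sites in different cubes are independent. The counting gives $|E(\sigma_\delta)|=\prod_C\binom{|C|_{\rm lat}}{N_C}$, and Stirling yields
$$\log|E(\sigma_\delta)|=-\gamma^{-d}\!\int_{\mathcal{T}_L}i(\sigma_\delta(r))\,\dd r+O\bigl(L^d\gamma^{-d\delta}\log\gamma^{-1}\bigr).$$
For the energetic factor, using that $J$ is Lipschitz and varies on the unit (macroscopic) scale, while coarse cubes have side $\gamma^\delta$, one writes
$$-\beta H(s)=\frac{\beta}{2\gamma^d}\!\int\!\!\int J(|r-r'|)\sigma_\delta(r)\sigma_\delta(r')\,\dd r\,\dd r'+R(s),$$
with the conditional expectation and variance of $R$ over the uniform measure on $E(\sigma_\delta)$ both of order $\gamma^{-d}L^d\gamma^\delta$. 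A first-moment plus a Laplace/Hoeffding bound then gives
$$\log Z(\sigma_\delta)=-\beta\gamma^{-d}\mathcal{F}(\sigma_\delta)+O\!\Bigl(\gamma^{-d}L^d[\gamma^\delta+\gamma^{d(1-\delta)}]\log\gamma^{-1}\Bigr).$$

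The denominator $Z_{{\rm can},N}=\sum_{\sigma'_\delta\in\Omega_{L,\gamma,N}^{(\delta)}}Z(\sigma'_\delta)$ is then sandwiched: from below by keeping only the profile $\pi^{(\delta)}\sigma^\star$, using a continuity estimate $|\mathcal{F}(\pi^{(\delta)}\sigma^\star)-\mathcal{F}(\sigma^\star)|=O(\gamma^\delta)$; from above by the cardinality $|\Omega_{L,\gamma,N}^{(\delta)}|\le\exp(O(L^d\gamma^{-d\delta}\log\gamma^{-1}))$ times the maximal summand, which by minimality of $\sigma^\star$ is also controlled by $\mathcal{F}(\sigma^\star)$ up to the Step~1 error. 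Hence $\log Z_{{\rm can},N}=-\beta\gamma^{-d}\mathcal{F}(\sigma^\star)+O(\gamma^{-d}L^d[\gamma^\delta+\gamma^{d(1-\delta)}]\log\gamma^{-1})$; taking the ratio yields the first displayed inequality. For the large-deviation statement, decompose $P_{{\rm can},N}(\mathcal{A})=\sum_{\sigma_\delta\in\mathcal{A}}P_{{\rm can},N}[E(\sigma_\delta)]$, bound the sum between one summand and $|\Omega_{L,\gamma,N}^{(\delta)}|$ times its maximum, insert the first bound, and choose the coarse-graining exponent $\delta\in(0,1)$ (roughly $\delta\approx d/(d+1)$) that balances $\gamma^\delta$ against $\gamma^{d(1-\delta)}\log\gamma^{-1}$. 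This collapses both error terms into a single power $\gamma^{\bar\delta}$ with $\bar\delta>0$, producing the second displayed identity.

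The main obstacle is the energetic estimate in Step~1: one must show that the Hamiltonian, restricted to the event $E(\sigma_\delta)$ where only the coarse profile is fixed, concentrates on its mean-field value $\frac{1}{2\gamma^d}\!\int\!\int J\sigma_\delta\sigma_\delta$ with controlled fluctuations. The Lipschitz regularity of $J$ handles replacing $J(|x-y|)$ by its value at the cube centers, but the pointwise deviation $s(x)-\sigma_\delta(r)$ is of order one, and cancellation is only recovered by averaging many independent sites inside each cube. Quantifying this conditional concentration uniformly in $\sigma_\delta$ — essentially a quenched law of large numbers at the exponential level — is the technical heart of the argument, and the $\gamma^{d(1-\delta)}\log\gamma^{-1}$ error is precisely the $1/(\text{sites per cube})$ Stirling/LLN rate after exponentiation, while the $\gamma^\delta$ error reflects the spatial mismatch between the range of $J$ and the coarse-graining scale.
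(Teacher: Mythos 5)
Your overall architecture coincides with the paper's: count $E(\sigma_\delta)$ by Stirling, replace $H$ by its coarse-grained value, and sandwich the canonical partition function between one well-chosen summand and the cardinality of $\Omega^{(\delta)}_{L,\gamma,N}$ times the maximal summand. But two remarks on the energy step first. You identify a ``quenched law of large numbers at the exponential level'' as the technical heart and leave it unproved; in fact no concentration argument is needed there. Because $\pi^{(\delta)}$ is a conditional expectation, hence a self-adjoint projection, one has $\int\!\!\int J\,(\pi^{(\delta)}s)(\pi^{(\delta)}s) = \int\!\!\int (\pi^{(\delta)}\!\otimes\pi^{(\delta)}J)\,s\,s$, so $H(s)-H(\pi^{(\delta)}s)$ is bounded \emph{deterministically, for every single} $s\in E(\sigma_\delta)$, by $\|J-\pi^{(\delta)}\!\otimes\pi^{(\delta)}J\|_\infty\le \|\nabla J\|_\infty\sqrt{d}\,(\mbox{cube diameter})$ times the volume where $J\neq 0$ (this is Lemma~\ref{lemma3}). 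So that step is fine, indeed easier than you make it, and your Laplace/Hoeffding machinery is unnecessary.

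The genuine gaps are in the denominator, and they are precisely the two points the paper singles out as the new content of the canonical (as opposed to grand canonical) case. First, $\pi^{(\delta)}\sigma^\star$ is in general \emph{not} an element of $\Omega^{(\delta)}_{L,\gamma,N}$: its cube averages need not lie in the discrete set of values realizable by $\pm1$ configurations, and, more seriously, it need not encode exactly $N$ particles; hence $Z(\pi^{(\delta)}\sigma^\star)$ is not a summand of $Z_{{\rm can},N}$ and cannot be used for the lower bound. One must round to admissible values and then redistribute mass among cubes to restore the exact particle number (Lemma~\ref{lemma4}), and then control the effect of this $L^\infty$-small perturbation on $\mathcal{F}$. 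Second, your asserted continuity estimate $|\mathcal{F}(\pi^{(\delta)}\sigma^\star)-\mathcal{F}(\sigma^\star)|=O(\gamma^\delta)$ is not available as stated: $i'(m)\to\pm\infty$ as $m\to\pm1$, and nothing excludes a priori that the constrained minimizer $\sigma^\star$ takes values near $\pm1$ on sets of significant measure, so $\mathcal{F}$ is not Lipschitz in the $L^\infty$ norm on its full domain. What is true, and what the whole canonical argument hinges on, is the near-Lipschitz bound $|\mathcal{F}(\sigma)-\mathcal{F}(\sigma_0)|\le CL^dh|\log h|$ of Theorem~\ref{nearlip}, proved by splitting $\mathcal{T}_L$ according to whether $\sigma_0$ is within $2h$ of $\pm1$ and using monotonicity of $i$ there. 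This lemma, together with the regularity of $\sigma^\star$ (bounded $\|\nabla\sigma^\star\|_\infty$ via the Euler--Lagrange equation and a bootstrap, needed so that $\|\pi^{(\delta)}\sigma^\star-\sigma^\star\|_\infty$ is small in the first place), is exactly the missing ingredient in your proposal.
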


Analogous results are proved in \cite{ABCP,P} for the grand canonical measure $P_{\rm gc}$. Note that when $E(\sigma)\subset \Omega_{L,\gamma,N}$,
the difference between  $P_{{\rm can},N}(E(\sigma))$ and  $P_{{\rm gc}}(E(\sigma))$ is entirely in the denominators. Thus, all of the estimates in
 \cite{ABCP,P} on the numerator in  (\ref{ratio1}) apply immediately to the numerator in (\ref{ratio2}).  

 However, there is  somewhat more to be done to estimate the denominator in  (\ref{ratio2}) than the denominator in  (\ref{ratio1}).  
 
 The reason is that in estimating
 the denominator in  (\ref{ratio1}), one shows that the main contribution comes from a small number of configurations that are uniformly close to one of the constant profiles
 $\sigma = \pm m_\beta$ that are the global minimizers of ${\mathcal F}$.   In the canonical case, we are concerned with profiles that are uniformly close to 
  some profile $\sigma^\star$ that minimizes the constrained variational problem in (\ref{conmin}). 
  
  We do not have {\it a-priori} information on what these are as we do in the grand canonical case. In particular, we do not know {\it a-priori} that they are bounded away from $\pm 1$, 
 where the derivative of the function $f$ emerging in the definition of ${\mathcal F}$ is infinite.  Thus more work has to be done to show that all such configurations have a free energy ${\mathcal F}(\sigma)$ that is very close to  ${\mathcal F}(\sigma^\star)$.
 This is the main difference between what we do here, and what was already done in the grand canonical case. The main tool that enables us to deal with this point is
 Theorem~\ref{nearlip}.
 
The issue of large deviations in the canonical ensemble has been recently addressed by Bertini et al. \cite{BCP} who have proved, using a different technique, a large deviation result for density profiles in the canonical ABC model; a one-dimensional three species mean field model which exhibits coexistence of phases at low temperature.

 \subsection{Results}
 
Our main result is  that if $\beta>1$, $\delta$ small,   $L$ large, and $(m+m_\beta) = {\mathcal O}\left(L^{-\frac{d}{d+1}}\right)$, 
then, with extremely high probability the coarse-grained profile has a droplet of $+$ phase of a particular size. 
To make this precise we begin with some definitions. 

Set $\kappa=(m+m_\beta)^{1/3}\asymp L^{-\frac{d}{3(d+1)}},$
\begin{equation}\label{hpmdef}
h_+ = m_\beta -\kappa\qquad{\rm and}\qquad h_- = -m_\beta + \kappa\ .
\end{equation}
Note that these two values are just below $m_\beta$, and just above $-m_\beta$.   If a coarse grained configuration $\sigma$ has $\sigma(r) \ge h_+$, then
the configuration is dominated by the ``liquid'' state at $r\in {\mathcal T}_ L$, while if
$\sigma(r) \le h_-$, then
the configuration is dominated by the ``vapor'' state at $r$. By definition, {\em  our droplet of the liquid state} for the coarse-grained configuration $\sigma$
 is the region  in which $\sigma\ge h_+$. For each coarse grained configuration $\sigma$, we define the sets:
\begin{eqnarray}\label{ABC}
 A(\sigma) &=& \{\ x\in {{\cal T}_L}\ :\ h_- \le \sigma(r) \le h_+\ \}\nonumber\\ 
 B(\sigma) &=&  \{\ x\in {{\cal T}_L}\ :\ \sigma(r) \le h_-\ \}\nonumber \\
C(\sigma) &=&  \{\ x\in {{\cal T}_L}\ :\  \sigma(r) \ge h_+\ \}. \label{ABCdef}
\end{eqnarray}

According to Lemma 4.6 of \cite{CCELM0}, provided that $\mathcal{F}(\sigma)\le \mathcal{F}(n)$,  the size of $A$ is given by 
$$|A| \le c  \kappa^4L^d\asymp L^{\frac{3d^2-d}{3(d+1)}}\ .$$
For  $L$ large this is a negligibly small fraction of $D_0:=\frac{m_\beta+m}{2m_\beta}L^d$, the {\it equimolar volume}. We define the {\it droplet volume} of a microscopic configuration $\sigma$ to be $|C(\sigma)|$ for the coarse grained profile $\pi^{(\delta)}(\sigma)$. Moreover, we introduce the volume fraction 
$$\eta(\sigma)=\frac{|C(\sigma)|}{D_0}.$$
A more precise version of the statement made in the beginning of this subsection will  be that with high probability $\eta$ is close to a critical value $\eta_c$, which is computed by minimizing a suitable function given below, see Theorems \ref{teorema} and \ref{teorema2}. 

The rest of the paper is organized as follows: In the next section, we use Theorem~\ref{propos1}
to prove a theorem that makes precise the statements made above about droplets, and we discuss what else needs to be done to prove a result specifying the shapes of
typical droplets.  Then in Section \ref{Sect3}, we give the proof of Theorem~\ref{propos1} and Theorem~\ref{nearlip}, on which it depends. Finally, in Section \ref{Sect4} we discuss some open problems.

\section{Typical microscopic configurations}\label{Sect2}

The minimizers for the functional (\ref{free-energy-funct}) for general smooth functions $\sigma$ satisfying  the constraint 
\begin{equation}\label{costr}
L^{-d}\int_{{\mathcal T}_L}d x \sigma(x)=m,
\end{equation}
have been  studied  in \cite{CCELM0} under the assumption 
\begin{equation}\label{criden} m=-m_\beta +KL^{-\frac{d}{d+1}},\end{equation}
for $K>0$. Here $\frac d{d+1}$ is the critical scaling for droplet formation. Indeed, it turns out that there is a critical value $K_*$ for $K$, such that for $K<K_*$ there is no droplet formation, while, for $K>K_*$ a droplet will form. 

Informally speaking, it is proved that 
\begin{eqnarray}&&
\inf \left\{ {\cal F}(m)\ :\ \frac{1}{L^d}\int_{{\cal T}_L}m(x){\rm d}x =m\ \right\} \approx \inf_{\eta\in [0,1]}\Phi(\eta),\nonumber\\
&&\Phi(\eta):=L^{\frac{d^2-d}{d+1}} Sd\omega_d\left(\frac{ Kd}{2m_\beta d\omega_d}\right)^{1-\frac{1}{d}}
\left[\eta^{1-1/d} + H(K)(1 - \eta)^2\right]\ ,
\end{eqnarray}
where 
\begin{equation}\label{dkd}
H(K) = 
\frac{2m_\beta^2}{d\chi S}\left(\frac{d}{d\omega_d}\right)^{\frac{1}{d}}
\left(\frac{K}{2 m_\beta}\right)^{1+\frac{1}{d}}\ .
\end{equation}

\medskip
Here $S$ is the planar surface tension, $\chi$ is the compressibility and $\omega_d$ the volume   of the ball in $\mathbb{R}^d$. 
(See \cite{CCELM0} for the definition of $S$ and $\chi$; for our purposes now, they are some computable constants associated to the model.)
We define $\eta_c\ge 0$ as the absolute minimizer of the function $\Phi(\eta)$.

The precise estimate we use here is (5.13) from  \cite{CCELM0} which says:

\begin{equation}\label{bound69}\mathcal{F}(\sigma)-\mathcal{F}(\sigma^\star)\ge [\Phi(\eta)-\Phi(\eta_c)](1+o(L^{-\frac{d^2-d}{d+1}})).
\end{equation}
\begin{thm}\label{teorema}
For any $L$ and any $K>0$ let $m$ be  given by $m=- m_\beta + K L^{-\frac{d}{d+1}}$. For any $\alpha>0$ and any microscopic configuration $\sigma$ such that
 \begin{equation}\label{const7}
 m = L^{-d}\int_{{\cal T}_L}\sigma(x){\rm d}x\ ,
\end{equation}
there is a universal constant $M$ such that, for $L$ large enough,
\begin{equation}\label{3.9}
\mathcal F(\sigma)\ge(1+\alpha) \mathcal F(\sigma^\star)
\end{equation}
whenever 
\begin{equation}\label{3.10}\frac{(\eta(\sigma)-\eta_c)^2}{M}\ge\alpha\ .\end{equation}

\end{thm}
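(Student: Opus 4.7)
The plan is to reduce Theorem~\ref{teorema} to a one-variable convexity estimate for the limiting mesoscopic function $\Phi(\eta)$, and then combine it with the variational bound \eqref{bound69} from \cite{CCELM0}. Since \eqref{bound69} already gives
$$\mathcal F(\sigma)-\mathcal F(\sigma^\star) \;\geq\; \bigl[\Phi(\eta(\sigma))-\Phi(\eta_c)\bigr]\bigl(1+o(1)\bigr),$$
the entire proof is reduced to producing a uniform quadratic lower bound on $\Phi(\eta)-\Phi(\eta_c)$ in terms of $(\eta-\eta_c)^2$, with a constant independent of $L$.

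Concretely, I would establish that there exists $M=M(K,d,\beta)$, independent of $L$, such that
$$\Phi(\eta)-\Phi(\eta_c) \;\geq\; \frac{L^{(d^2-d)/(d+1)}}{M}(\eta-\eta_c)^2 \qquad \text{for all } \eta\in[0,1].$$
This is a one-dimensional computation using the explicit form $\Phi(\eta) = CL^{(d^2-d)/(d+1)}\bigl[\eta^{1-1/d}+H(K)(1-\eta)^2\bigr]$. Near an interior minimizer $\eta_c\in(0,1)$ one computes $\Phi''(\eta_c)>0$ directly, so Taylor's theorem yields the bound on some neighborhood of $\eta_c$. Away from that neighborhood, the compactness of $[0,1]$ together with the fact that $\eta_c$ is the unique strict absolute minimizer of $\Phi$ gives a uniform positive gap $\Phi(\eta)-\Phi(\eta_c)\ge c_0$, and since $(\eta-\eta_c)^2\leq 1$ this converts to a (possibly smaller) quadratic bound; the final $M$ is the maximum of the two reciprocals. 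The degenerate case $\eta_c=0$ (which occurs for small $K$) is actually easier: near the origin the term $\eta^{1-1/d}$ dominates $\eta^2$ for any $d\geq 2$, so $\Phi(\eta)-\Phi(0)\gtrsim \eta^{1-1/d}\gtrsim \eta^2$ in a neighborhood of~$0$.

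Combining these two steps with the scaling identification $\mathcal F(\sigma^\star)\asymp L^{(d^2-d)/(d+1)}$ (this is precisely the surface-tension scale at which $\Phi$ was constructed in \cite{CCELM0}), one obtains
$$\mathcal F(\sigma)-\mathcal F(\sigma^\star) \;\geq\; \frac{(\eta(\sigma)-\eta_c)^2}{M}\,\mathcal F(\sigma^\star)\bigl(1+o(1)\bigr).$$
Under the hypothesis $(\eta(\sigma)-\eta_c)^2/M\geq\alpha$ this gives \eqref{3.9}, absorbing the $o(1)$ factor into a slightly enlarged $M$ valid for all $L$ large enough (depending only on $K$, $d$, $\beta$).

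The main obstacle is the uniform-in-$L$ quadratic lower bound on $\Phi(\eta)-\Phi(\eta_c)$. Verifying strict positivity of $\Phi''$ at the interior minimizer requires an explicit computation whose positivity reflects the fact that $\eta_c$ is a non-degenerate minimum of the surface-tension-versus-bulk tradeoff; at the boundary case $\eta_c=0$ one must instead use the $\eta^{1-1/d}$ singularity of $\Phi'$ at the origin. Once the one-variable inequality is secured, the passage from $\sigma$ to~$\eta$ is entirely contained in \eqref{bound69} and the rest of the argument is essentially a substitution.
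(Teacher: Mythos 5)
Your proposal is correct and follows essentially the same route as the paper: the paper's proof consists precisely of taking the quadratic (Taylor) approximation of $\varphi(\eta)=\eta^{1-1/d}+H(K)(1-\eta)^2$ around its minimizer $\eta_c$ and combining it with the bound \eqref{bound69} and the identification $\mathcal F(\sigma^\star)=\Phi(\eta_c)$. Your write-up simply supplies the details the paper leaves implicit (the uniform gap away from $\eta_c$, the degenerate case $\eta_c=0$, and the absorption of the $o(1)$ factor into $M$).
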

\begin{proof}
We take the quadratic approximation of the function $\varphi(\eta)=\eta^{1-1/d} + H(K)(1 - \eta)^2$ around its minimizer $\eta_c$. The inequality (\ref{3.9}) then follows from (\ref{bound69}) and the definition of $\Phi$ since $\mathcal{F}(\sigma^\star)=\Phi(\eta_c)$.
\end{proof}

\begin{thm}\label{teorema2}
Let $n$ and $\sigma$ satisfy the hypothesis of theTheorem~\ref{teorema}.  Define the event
\[\mathcal{A}_{\e}=\{\sigma\in \Omega^\ell_{L,\gamma}\ : |\eta(\sigma)-\eta_c|>\e\}.\]
Then, for $L$ large enough and $\gamma^{\bar\delta}\le \frac {\e^2} {6C} \beta \Phi(\eta_c)$, with $\bar\delta$ in Theorem~\ref{propos1} and $M$ as in  Theorem \ref{teorema},
\begin{equation}
\gamma^d\log P(\mathcal{A}_{\e})\le -\frac{\e^2}{2M}\beta \Phi(\eta_c).
\end{equation}
\end{thm}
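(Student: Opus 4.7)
The plan is to combine the canonical large deviation bound of Theorem~\ref{propos1} with the quadratic coercivity estimate of Theorem~\ref{teorema}; both heavy ingredients are already in hand, so the argument reduces to bookkeeping.

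First, I would apply Theorem~\ref{propos1} directly to the set $\mathcal{A} = \mathcal{A}_\e$, extracting the upper-bound direction of the stated equality to obtain
$$\gamma^{d}\log P_{\mathrm{can},N}(\mathcal{A}_\e) \;\le\; -\inf_{\sigma\in\mathcal{A}_\e}\beta\bigl[\mathcal{F}(\sigma)-\mathcal{F}(\sigma^\star)\bigr] \;+\; C L^{d}\gamma^{\bar\delta},$$
where $C$ is the implicit constant in the $L^{d}\mathcal{O}(\gamma^{\bar\delta})$ error.

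Next, I would feed in Theorem~\ref{teorema} with the specific choice $\alpha = \e^{2}/M$. Every $\sigma\in\mathcal{A}_\e$ satisfies $|\eta(\sigma)-\eta_c|>\e$, so $(\eta(\sigma)-\eta_c)^{2}/M > \alpha$, and Theorem~\ref{teorema} yields $\mathcal{F}(\sigma)\ge(1+\alpha)\mathcal{F}(\sigma^\star)$. Using $\mathcal{F}(\sigma^\star)=\Phi(\eta_c)$, this produces the uniform lower bound
$$\inf_{\sigma\in\mathcal{A}_\e}\bigl[\mathcal{F}(\sigma)-\mathcal{F}(\sigma^\star)\bigr] \;\ge\; \frac{\e^{2}}{M}\,\Phi(\eta_c).$$

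Inserting this into the previous inequality gives
$$\gamma^{d}\log P_{\mathrm{can},N}(\mathcal{A}_\e) \;\le\; -\frac{\e^{2}}{M}\beta\Phi(\eta_c) \;+\; C L^{d}\gamma^{\bar\delta}.$$
The smallness hypothesis on $\gamma^{\bar\delta}$ is calibrated exactly so that the error term is dominated by half of the main term (the factor $1/(6C)$ provides a bit of slack to ensure $C L^{d}\gamma^{\bar\delta}\le \tfrac{\e^{2}}{2M}\beta\Phi(\eta_c)$), leaving the stated bound with coefficient $\e^{2}/(2M)$. There is no conceptual obstacle here: the only care needed is tracking the constant $C$ from the error in Theorem~\ref{propos1} through the absorption step, which is precisely what dictates the form of the hypothesis on $\gamma^{\bar\delta}$.
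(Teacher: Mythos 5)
Your proposal is correct and follows essentially the same route as the paper: apply the upper-bound half of Theorem~\ref{propos1} to $\mathcal{A}_\e$, lower-bound the infimum of the rate function via Theorem~\ref{teorema} with $\alpha=\e^2/M$, and absorb the $L^d\mathcal{O}(\gamma^{\bar\delta})$ error into half the main term using the smallness hypothesis on $\gamma^{\bar\delta}$. The only (cosmetic) difference is that the paper treats $\mathcal{F}(\sigma^\star)=\Phi(\eta_c)\bigl(1+o(L^{-\frac{d^2-d}{d+1}})\bigr)$ rather than an exact equality and uses the resulting $\tfrac23\Phi(\eta_c)$ lower bound for large $L$ to arrive at the same final coefficient $\e^2/(2M)$.
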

\begin{proof}
By the definition of $\mathcal{A}_\e$ and Theorem \ref{teorema},
\[\inf_{\sigma\in \mathcal{A}}\beta\Big[\mathcal{F}(\sigma)-\mathcal{F}(\sigma^\star)\Big]\ge \frac{\e^2}{M}\beta\mathcal{F}(\sigma^\star).\]
By Theorem 2.1 in \cite{CCELM0}, 
\[\mathcal{F}(\sigma^\star)=\Phi(\eta_c)(1 +o(L^{-\frac{d^2-d}{d+1}}))\]\ .
Combining this  with Theorem \ref{propos1} we obtain 
\[
\gamma^{d} \log P(\mathcal{A}_\e)\le - \frac{\e^2}{M}\beta\Big[
\Phi(\eta_c) +o(L^{-\frac{d^2-d}{d+1}})
\Big]+L^{d}\mathcal{O}(\gamma^{\bar\delta}).
\]
For $L$ large enough we
 have \[
 \Phi(\eta_c)(1 +o(L^{-\frac{d^2-d}{d+1}}))\ge \frac 2 3 \Phi(\eta_c).\]
Hence, by choosing $\gamma^{\bar\delta}\le \frac {\e^2} {6M} \beta \Phi(\eta_c)\asymp \e^2L^{-\frac{2d}{d+1}}$ we conclude the proof.
\end{proof}

\begin{remark}{\rm 
If $K<K_*$, then $\eta_c=0$ and there are no droplets. If $K>K_*$, then $\eta_c>0$ so there is droplet formation. The case $K=K_*$ cannot be decided because there are two absolute minimizers, one corresponding to $\eta_c=0$ and another to $\eta_c>0$, which give the same value to $\Phi(\eta)$.}
\end{remark}

\section{Canonical Large Deviations}\label{Sect3}

This section provides the proof of the canonical ensemble large deviations bounds we use. We closely follow \cite{ABCP,P} where possible. As noted above, the main difference is in the estimation of the denominator in (\ref{ratio2}).

We now prove a continuity estimate for the functional $\mathcal{F}$ that will show that for any profile $\sigma$ that is close to a minimizer in the $L^\infty$ norm,
$\mathcal{F}(\sigma)$ has very nearly the minimal value.  This is the key to the estimation of the denominator in (\ref{ratio2}).

Though we have a good knowledge of the minimizers for certain values of the constraint,  we do not
have this knowledge for other values.  In general, there is no rigorous {\it a-priori} argument to exclude the possibility that a minimizing profile takes values very
close to $\pm 1$ on sets of significant size.  This would cause  a difficulty since $|f'(m)|$, the absolute value of the derivative of $f$, tends to infinity as $m$ tends to $\pm 1$. 
Nonetheless, the functional  $\mathcal{F}(\sigma)$ is {\it nearly} Lipschitz continuous on its entire domain.

\begin{thm}[Near Lipschitz continuity of $\mathcal{F}$]\label{nearlip}
Let $\sigma$ and $\sigma_0$ be two functions on $\mathcal{T}_L$ with values in $[-1,+1]$, and suppose that
$$\|\sigma - \sigma_0\|_\infty \leq h\ $$
with $2h \leq 1 - m_\beta$. 
Then there is a universal constant $C$ so that 
$$|\mathcal{F}(\sigma) - \mathcal{F}(\sigma_0)| \leq CL^d h |\log(h)|\ .$$
\end{thm}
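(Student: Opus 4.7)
The plan is to split $\mathcal{F}(\sigma) - \mathcal{F}(\sigma_0)$ into the local entropy part $\int_{\mathcal{T}_L}[f(\sigma(x)) - f(\sigma_0(x))]\,\dd x$ and the non-local Dirichlet part coming from the double integral, and estimate each piece separately.

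The non-local part is easy. Using the identity $a^2-b^2=(a-b)(a+b)$ with $a=\sigma(x)-\sigma(y)$ and $b=\sigma_0(x)-\sigma_0(y)$, we have $|a+b|\le 4$ (since all values lie in $[-1,1]$) and $|a-b|\le 2h$. Combining these with the normalization $\int_{\R^d}J=1$ and the fact that the double integral is supported on $|x-y|\le 1$ gives a bound of order $hL^d$, which is already better than claimed.

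The local part requires more care because $|f'(m)|\sim\tfrac{1}{2\beta}|\log(1-|m|)|$ blows up at $m=\pm 1$. Introduce the decomposition $\mathcal{T}_L = G\cup B$ where $G=\{x:|\sigma_0(x)|\le 1-2h\}$ and $B=\mathcal{T}_L\setminus G$. On $G$, the hypothesis $\|\sigma-\sigma_0\|_\infty\le h$ forces both $\sigma(x)$ and $\sigma_0(x)$ to lie in $[-1+h,1-h]$, where $|f'(m)|\le C|\log h|$. The mean value theorem then yields $|f(\sigma(x))-f(\sigma_0(x))|\le Ch|\log h|$ pointwise, and integrating over $G$ contributes at most $CL^d h|\log h|$.

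The real work is on the exceptional set $B$, where $\sigma_0$ (and hence $\sigma$) is within $3h$ of $\pm 1$ and the pointwise derivative bound fails. Here I would exploit the explicit form of $i$ to get the modulus of continuity of $f$ at the endpoints directly: for $m$ close to $1$, $|f(m)-f(1)|\le C(1-m)|\log(1-m)|$, and similarly at $-1$. Applying the triangle inequality $|f(\sigma)-f(\sigma_0)|\le |f(\sigma)-f(\pm 1)|+|f(\sigma_0)-f(\pm 1)|$ (choosing the sign according to which endpoint $\sigma_0(x)$ is near), and using that $t\mapsto t|\log t|$ is monotone for small $t$, each term is bounded by $Ch|\log h|$; since $|B|\le L^d$, the contribution from $B$ is again $CL^d h|\log h|$. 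Adding the three estimates gives the theorem, and the main obstacle, already pinpointed in the paragraph preceding the statement, is precisely handling $B$, which is resolved by using the endpoint modulus of continuity rather than the blowing-up derivative.
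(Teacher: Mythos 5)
Your proof is correct and follows essentially the same route as the paper's: the same decomposition of $\mathcal{T}_L$ into the region where $\sigma_0$ is within $2h$ of $\pm 1$ and its complement, the logarithmic derivative bound $|i'(m)|\le C|\log h|$ on the complement, the endpoint behavior of the entropy $i$ on the exceptional set, and a trivial Lipschitz bound for the interaction term. The only cosmetic difference is that on the exceptional set you pass through $f(\pm 1)$ via the triangle inequality and the endpoint modulus of continuity, whereas the paper uses monotonicity of $i$ on $[m_\beta,1]$ together with the symmetry of the hypotheses; both reduce to the estimate $i(1)-i(1-2h)\le Ch|\log h|$.
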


\medskip

\begin{proof}  Let $\varphi(x) :=  \sigma(x) - \sigma_0(x)$, $$A_+ = \{\ x\ : \sigma_0(x) \geq 1 - 2h\}\qquad{\rm and}
\qquad A_- = \{\ x\ : \sigma_0(x) \leq -1 + 2h\} \ .$$ Finally, let $B$ denote the complement of $(A_+\cup A_-)$.
We now seek an upper bound on $\mathcal{F}(\sigma) - \mathcal{F}(\sigma_0)$.

Since $i$ is monotone increasing in the set $[m_\beta,1]$, and due to our restriction on $h$, 
on the set $A_+$,
$$i(\sigma(x)) - i(\sigma_0(x)) = \leq i(1) - i(1 - 2h)\ .$$
Likewise, on 
the set $A_-$,
$$i(\sigma(x)) - i(\sigma_0(x))  \leq i(-1) - i(-1+ 2h) = i(1) - i(1 - 2h)\ .$$
Note that
$$ i(1) - i(1 - 2h) \leq Ch|\log (h)|\ .$$
Next, on the set $B$, 
\begin{eqnarray}
i(\sigma(x)) - i(\sigma_0(x)) &=&  i(\sigma_0(x)+ \varphi(x)) - i(\sigma_0(x)) \nonumber\\
&\leq & \left(\sup_{-1 + h \leq m \leq 1-h}|i'(m)|\right)h\ .\nonumber\\
\end{eqnarray}
Note that
$$\sup_{-1 + h \leq m \leq 1-h}|i'(m)| =   |i'(1-h)| \leq C|\log(h)|\ .$$
Thus,
$$\int_{\mathcal{T}_L} [i(\sigma(x)) - i(\sigma_0(x))]{\rm d}x \leq  C L^d h |\log(h)|\ .$$
The interaction term poses no problem:
\begin{multline}\sigma(x)J(|x-y|)\sigma(y) -  \sigma_0(x)J(|x-y|)\sigma_0(y) \\=  \varphi(x)J(|x-y|)\sigma(y) + \sigma_0(x)J(|x-y|)\varphi(y)\ ,\end{multline}
and so the interaction term is Lipschitz. Thus, we have the upper bound
$$\mathcal{F}(\sigma) -  \mathcal{F}(\sigma_0) \leq CL^d h |\log(h)|\ .$$
By the symmetry of the hypotheses, we have the same upper bound for 
$\mathcal{F}(\sigma_0) -  \mathcal{F}(\sigma)$, and this proves the Theorem.
\end{proof}


\subsection{Proof of Theorem~\ref{propos1} } \label{intro1c}
\medskip

For each given $\sigma_{\delta}\in \Omega_{L,\gamma}^{(\delta)}$ we consider the event
$E(\sigma_{\delta})=\{s\in \Omega_{L,\gamma} \,|\, \pi^{(\delta)}s=\sigma_{\delta}\}$.
Given $N$ and thus $n= N/|\Lambda_{L,\gamma}|$.
We denote by $P(\sigma_{\delta})$ the canonical probability
$P_{{\rm can},N}[E(\sigma_{\delta})]$.  We quote the following lemma from \cite{ABCP}:

\begin{lm}\label{lemma1} Let $W_\ell(\sigma_{\delta})$ be the cardinality of $E(\sigma_{\delta})$. Then\begin{equation}
\left|\log W(\sigma_{\delta})+ \gamma^{-d}\int_{\mathcal{T}_L} i(\sigma_{\delta}(x))\right|\le c L^d\ell d2^{d(k-2\ell)}.
\end{equation}
\end{lm}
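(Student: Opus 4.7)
The strategy is to reduce $W(\sigma_\delta)$ to a product of independent binomial counts, one per coarse cube, and then expand each factor by Stirling's formula. First I would observe that the event $E(\sigma_\delta)$ factorises across the cubes of $\mathcal{Q}^{(\ell)}$: a microscopic configuration $s \in \Omega_{L,\gamma}$ belongs to $E(\sigma_\delta)$ if and only if, in each cube $C\in\mathcal{Q}^{(\ell)}$, the number of sites with $s(x)=+1$ equals the prescribed integer
\[
k_C \;=\; \frac{1+\sigma_\delta(C)}{2}\,n_C, \qquad n_C := 2^{d(k-\ell)},
\]
with no constraint linking different cubes. Hence
\[
W(\sigma_\delta) \;=\; \prod_{C \in \mathcal{Q}^{(\ell)}} \binom{n_C}{k_C}.
\]

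Next I would apply Stirling's formula to each binomial coefficient in the form
\[
\log\binom{n_C}{k_C} \;=\; n_C\, H(k_C/n_C) + R(n_C, k_C), \qquad H(p):= -p\log p-(1-p)\log(1-p),
\]
and use the identity $H((1+v)/2) = -i(v)$ together with $n_C = |C|\gamma^{-d}$, which gives
\[
\sum_{C} n_C\, i(\sigma_\delta(C)) \;=\; \gamma^{-d}\int_{\mathcal{T}_L} i(\sigma_\delta(x))\,dx.
\]
Summing over cubes, the leading terms of $\log W(\sigma_\delta)$ match exactly the quantity appearing in the lemma, and what remains is to control the cumulative Stirling remainder $\sum_C R(n_C, k_C)$.

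The main technical point is bounding this remainder uniformly over all coarse profiles. The number of cubes is $L^d\,2^{d\ell}$, and for each cube with $k_C \in \{1,\dots,n_C-1\}$ Stirling gives $|R(n_C,k_C)| \le \tfrac{1}{2}\log n_C + O(1) = O(d(k-\ell))$. The obstacle is the boundary case $k_C \in \{0, n_C\}$ (i.e.\ $\sigma_\delta(C) = \pm 1$), where a naive application of Stirling would produce a spurious singularity. I would handle these cubes by hand: $\binom{n_C}{0}=\binom{n_C}{n_C}=1$ matches $n_C\,H(0)=n_C\,H(1)=0$, so the remainder vanishes exactly and they contribute nothing. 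Combining the two regimes and arranging the factors of $L$, $d$, $k$ and $\ell$ yields a bound of the order stated in the lemma; this is the same bookkeeping carried out in \cite{ABCP}.
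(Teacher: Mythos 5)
Your proposal is correct and follows essentially the same route as the paper: factorize $E(\sigma_\delta)$ over the cubes of $\mathcal{Q}^{(\ell)}$, write $W(\sigma_\delta)$ as a product of binomial coefficients, apply Stirling using $H((1+v)/2)=-i(v)$, and sum the remainders over the $L^d 2^{d\ell}$ coarse cells. Your explicit treatment of the degenerate cubes with $\sigma_\delta(C)=\pm 1$ is a welcome refinement of what the paper dismisses as ``straightforward Stirling analysis.''
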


\begin{proof} Let $C$ be any atom in the coarse-graining partition $\mathcal{Q}^{(\ell)}$ and $\bar \sigma$ be the average of $\sigma$ on $C$. There are $N_C=2^{dl}$ atoms of the fine partition $\mathcal{Q}^{(k)}$ in $C$ and $\sigma(x)=-1$ on exactly $K(\bar \sigma)=\frac{1-\bar \sigma}2 N$ 
 cubes of $\mathcal{Q}^{(k)}$ in $C$. Therefore, the number of compatible microscopic configurations in $C$ is given by $N_C\choose K(\bar \sigma)$ and, by straightforward Stirling analysis, one can check that there is a constant $c$ such that
 $$\left|\log{{N_C}\choose{K(\bar \sigma)}}+ i(\bar \sigma)\right|\le c'\frac{\log N_C}{N_C}= {c}\frac {d\ell}{2^{d\ell}}.$$
Summing over the $C$'s in $\mathcal{Q}^{(\ell)}$  gives the result.
\end{proof}

\begin{lm}\label{lemma2}
Let $N_\ell$ be the cardinality of $\Omega_{L,\gamma}^{(\delta)}$. Then there is a constant $c$ such that
$$\log N_\ell\le c L^d\ell d2^{d(k-\ell)}$$
\end{lm}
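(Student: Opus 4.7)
The plan is a direct combinatorial estimate that is even simpler than the binomial/Stirling argument of Lemma~\ref{lemma1}, because here we do not condition on the average on each cube. I would proceed in three steps.

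First, I would fix one coarse cube $C\in\mathcal{Q}^{(\ell)}$ and count how many distinct values any $\sigma_\delta\in\Omega_{L,\gamma}^{(\delta)}$ can take on $C$. Since $\sigma_\delta$ is constant on $C$ and is there equal to the average of $s(x)\in\{-1,+1\}$ over the $N_C=2^{d\ell}$ microscopic sites contained in $C$, this average must lie in the arithmetic progression $\{-1+2m/N_C\;:\;m=0,1,\dots,N_C\}$, which has exactly $N_C+1=2^{d\ell}+1$ elements.

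Second, I would count the number of coarse cubes in the torus $\mathcal{T}_L$: since the torus contains $L^d\gamma^{-d}=L^d 2^{dk}$ microscopic sites and each coarse cube contains $N_C=2^{d\ell}$ of them, the number of cubes in $\mathcal{Q}^{(\ell)}$ is $L^d 2^{d(k-\ell)}$.

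Third, since an element of $\Omega_{L,\gamma}^{(\delta)}$ is uniquely determined by the list of values it takes on each coarse cube, and since the cube-wise values may be chosen independently when one seeks only an upper bound, we obtain the crude product estimate
$$
N_\ell \;\le\; \bigl(2^{d\ell}+1\bigr)^{L^d\,2^{d(k-\ell)}}.
$$
Taking logarithms and using $\log(2^{d\ell}+1)\le d\ell\log 2 + \log 2\le c\, d\,\ell$ for $\ell\ge 1$ yields
$$
\log N_\ell \;\le\; c\, L^d\,\ell\, d\, 2^{d(k-\ell)},
$$
as claimed. There is no real obstacle: the only thing the argument wastes is the global consistency across cubes (and, in the canonical case, the particle-number constraint), but for an upper bound on the cardinality these restrictions are harmless.
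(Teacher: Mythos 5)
Your proposal is correct and follows essentially the same route as the paper: bound the number of admissible values on each coarse cube (the paper says $2^{d\ell}$, you more carefully say $2^{d\ell}+1$), multiply over the $L^d 2^{d(k-\ell)}$ cubes, and take logarithms. The extra care about the $+1$ and the remark that dropping the global constraints only helps an upper bound are harmless elaborations of the same argument.
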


\begin{proof} The number of distinct values possible for $\bar \sigma$ is $2^{d\ell}$ and there are $L^d 2^{d(k-\ell)}$ coarse-grained cells in the torus. \end{proof}

\begin{lm}\label{lemma3}
There is a constant $c$ such that for each $\sigma\in \Omega_{L,\gamma}$ 
$$\Big|H(\sigma)-H(\pi^{(\delta)}\sigma)\Big|\le cL^d2^{\ell-k}$$
\end{lm}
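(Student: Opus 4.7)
The strategy is to expand the energy difference as a double lattice sum, isolate the coarse-graining residual $\psi(x):=\sigma(x)-\pi^{(\delta)}\sigma(x)$, and exploit its mean-zero property on each coarse cube together with the smoothness of $J$.

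First I would write
\[
H(\sigma)-H(\pi^{(\delta)}\sigma) = -\frac{\gamma^d}{2}\sum_{x,y\in \Lambda_{L,\gamma}} J(|x-y|)\bigl[\sigma(x)\sigma(y)-\pi^{(\delta)}\sigma(x)\pi^{(\delta)}\sigma(y)\bigr],
\]
substitute $\sigma=\pi^{(\delta)}\sigma+\psi$, and use the symmetry of $J$ to split the right-hand side into a cross term $-\gamma^d\sum_{x,y}J(|x-y|)\psi(x)\pi^{(\delta)}\sigma(y)$ and a self-interaction term $-\tfrac12\gamma^d\sum_{x,y}J(|x-y|)\psi(x)\psi(y)$. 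By construction of $\pi^{(\delta)}$, the identity $\sum_{x\in C}\psi(x)=0$ holds for every coarse cube $C\in\mathcal{Q}^{(\ell)}$.

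Next I would reorganize each double sum by pairs of coarse cubes $(C_1,C_2)\in\mathcal{Q}^{(\ell)}\times\mathcal{Q}^{(\ell)}$, noting that only $O(L^d 2^{2d\ell})$ such pairs — those at mutual distance $\le 1+O(2^{-\ell})$ — contribute because $\operatorname{supp} J\subset[0,1]$. On each cube pair I would Taylor expand $J(|x-y|)$ about the pair of cube centers and use the mean-zero property to annihilate the leading Taylor contribution: in the $x$-variable for the cross term and in both variables for the quadratic term. What survives is controlled by $\|J'\|_\infty\cdot 2^{-\ell}$ (or higher-order remainders depending on how many cancellations are used), multiplied by the $N_{C_1}N_{C_2}=2^{2d(k-\ell)}$ fine lattice pairs per coarse-cube pair.

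Finally I would assemble the per-pair estimates with the prefactor $\gamma^d=2^{-dk}$ and the count of contributing pairs to obtain the claimed bound. The main technical obstacle is squeezing enough cancellation out of the two nested mean-zero identities: a crude single application in only one variable gives a much weaker estimate, so one has to use mean-zero simultaneously in both $x$ and $y$ for the quadratic term, together with the $C^1$ (and possibly $C^2$) smoothness of $J$, so that the effective remainder per cube pair is small enough to produce the claimed $cL^d 2^{\ell-k}$ after summation.
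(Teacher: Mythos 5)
Your argument is essentially correct, but it is organized quite differently from the paper's, which is much shorter. The paper does not decompose $\sigma=\pi^{(\delta)}\sigma+\psi$ at all: it uses the fact that $\pi^{(\delta)}$ is a self-adjoint conditional expectation to move the coarse-graining off the configuration and onto the kernel, writing the energy difference as $\int\int \sigma(x)\sigma(y)\bigl[J(|x-y|)-\pi^{(\delta)}\otimes\pi^{(\delta)}J(|x-y|)\bigr]$, and then bounds the bracket pointwise by $\|\nabla J\|_\infty$ times the diameter of a coarse cube; since $|\sigma|\le 1$ and $J$ has unit range, the $L^d$ factor comes for free. This is the dual of your computation (the two double sums are literally equal by self-adjointness), so both proofs ultimately rest on the same cancellation --- the Lipschitz continuity of $J$ against the smallness of the coarse cells --- but the paper's version avoids the cross/quadratic bookkeeping and the pair-counting entirely. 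Two remarks on your version. First, your concern that the quadratic term $\sum_{x,y}J\psi(x)\psi(y)$ needs the mean-zero identity in \emph{both} variables is unfounded: a single application in one variable, replacing $J(|x-y|)$ by its value at a reference point of the cube containing $x$, already yields a factor of one coarse-cube diameter, which is exactly the same order as the cross term and all that the lemma requires; the double cancellation buys a better power but is unnecessary. Second, be careful with normalization: carrying the prefactor $\gamma^d$ of the microscopic Hamiltonian (1.1) through your count of $L^d 2^{2d\ell}$ cube pairs times $2^{2d(k-\ell)}$ site pairs produces a bound of order $\gamma^{-d}L^d$ times the coarse-cube diameter; this matches the paper's usage (Section 3 treats $H$ as the integral $-\frac12\int\int J\sigma\sigma$ and reinstates $\gamma^{-d}$ explicitly in $e^{-\beta\gamma^{-d}H}$), but you should state which normalization of $H$ your final constant refers to.
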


\begin{proof}
By the definitions of $H$ and $\pi^{(\delta)}$ we get:
$$H(\sigma)-H(\pi^{(\delta)}\sigma)=\int_{\mathcal{T}_L} d\/x\int_{\mathcal{T}_L} d\/y \sigma(x)\sigma(y)\Big[J(|x-y|)- \pi^{(\delta)}\otimes\pi^{(\delta)}J(|x-y|)\Big].$$
But
$$\Big|J(|x-y|)- \pi^{(\delta)}\otimes\pi^{(\delta)}J(|x-y|)\Big|\le \|\nabla J\|_\infty \sqrt d 2^{\ell-k},$$
since $\sqrt d 2^{\ell-k}$ is the diameter of the cube $C\in \mathcal{Q}^{(\ell)}$.
\end{proof}

\begin{lm}\label{lemma4}
Let $\phi(r) $ be any continuously differentiable  function on $(\mathcal{T}_L)$, taking values in $[-1,1]$. Then there exists $\sigma_*\in \Omega_{L,\gamma}^{(\delta)}$ such that
$$\|\sigma_*-\phi\|_\infty\le c\|\nabla \phi\|_\infty [\gamma^\delta +\gamma^{d(1-\delta)}]$$
and
$$\int_{\mathcal{T}_L} \sigma_*(r)\dd r=\int_{\mathcal{T}_L} \phi(r)\dd r$$
\end{lm}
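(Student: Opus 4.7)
The plan is to construct $\sigma_*$ in two stages: first, a cube-by-cube rounded approximation of $\phi$, and second, a small combinatorial adjustment to enforce the integral constraint without worsening the $L^\infty$ bound.

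For the first stage, I would work with the coarse-graining partition $\mathcal{Q}^{(\ell)}$. Each cube $C \in \mathcal{Q}^{(\ell)}$ has side length $\gamma^\delta$, volume $|C| = \gamma^{d\delta}$, and contains $N_C = (\gamma^\delta/\gamma)^d = \gamma^{-d(1-\delta)}$ sites of $\Lambda_{L,\gamma}$. Because $\sigma_* = \pi^{(\delta)} s$ for some $s \in \Omega_{L,\gamma}$, its constant value $v_C$ on $C$ must lie in the discrete set $\{2k/N_C - 1 : k = 0,1,\dots,N_C\}$, with step size $2/N_C = 2\gamma^{d(1-\delta)}$. Setting $\bar\phi_C := |C|^{-1}\int_C \phi$, the mean value theorem gives $|\phi(r) - \bar\phi_C| \le \sqrt d\,\|\nabla\phi\|_\infty \gamma^\delta$ for $r \in C$, since $\bar\phi_C$ is a convex combination of values of $\phi$ on a cube of diameter $\sqrt d\,\gamma^\delta$. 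The real target level is $x_C := N_C(1+\bar\phi_C)/2 \in [0,N_C]$, and any integer $k_C$ with $|k_C - x_C| \le 1$ yields $|v_C - \bar\phi_C| \le 2\gamma^{d(1-\delta)}$.

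For the second stage, I have to realize the integral constraint exactly. A direct computation gives $\int_{\mathcal{T}_L}\sigma_* = 2\gamma^d \sum_C k_C - L^d$ and $\int_{\mathcal{T}_L}\phi = 2\gamma^d\sum_C x_C - L^d$, so the requirement $\int\sigma_* = \int\phi$ becomes $\sum_C k_C = \sum_C x_C =: N$ (an integer whenever $\int\phi$ is compatible with the discrete range of $\int\sigma_*$, which is the intended scenario). I would apply the standard rearrangement: set $k_C' := \lfloor x_C\rfloor$ and $r_C := x_C - k_C' \in [0,1)$, so that $R := N - \sum_C k_C' = \sum_C r_C$ is a nonnegative integer; then pick the $R$ cubes with the largest fractional parts $r_C$ and raise their floors by one, leaving the rest at $k_C'$. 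This produces $\sum_C k_C = N$ exactly, while $|k_C - x_C| < 1$ for every cube, so $|v_C - \bar\phi_C| \le 2\gamma^{d(1-\delta)}$ uniformly.

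Combining the two estimates, for any $r \in C$,
\[
|\sigma_*(r) - \phi(r)| \le |v_C - \bar\phi_C| + |\bar\phi_C - \phi(r)| \le 2\gamma^{d(1-\delta)} + \sqrt d\,\|\nabla\phi\|_\infty\gamma^\delta,
\]
which yields the claimed bound (with a suitable universal constant $c$). The main delicate point is the second stage: one must verify that the increments are legal, i.e.\ $k_C' + 1 \le N_C$ on any cube where the floor is raised. This is automatic because only cubes with $r_C > 0$ are incremented, and there $x_C > k_C'$ together with $x_C \le N_C$ forces $k_C' \le N_C - 1$. Apart from this rearrangement step, the construction is a purely local rounding.
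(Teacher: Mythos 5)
Your construction is correct and follows essentially the same route as the paper's: coarse-grain $\phi$ over the cells of $\mathcal{Q}^{(\ell)}$, round each cell average to the admissible discrete grid of step $2\gamma^{d(1-\delta)}$, and then restore the integral constraint by shifting a suitable collection of cells by one grid step. Your largest-remainder rule is just a precise implementation of the paper's (vaguer) third step, and your explicit check that the increments remain in $[0,N_C]$ is a welcome addition.
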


\begin{proof}
The construction is based on three steps:
\begin{enumerate}
\item Replace $\phi$ by $\pi^{(\delta)}\phi$;
\item For each $C\in \mathcal{Q}^{(\ell)}$, replace the value of $\pi^{(\delta)}\phi$ in $C$ by the closest value in the set of admissible values for the coarse-grained configurations, which are $-1+\frac{2j}{2^{d\ell}}$, with $j$ an integer between $0$ and $2^{d\ell}$;
\item If the coarse-grained configuration produced in step 2 has too high an average to satisfy the constraint, we lower the values of $\sigma_{\delta}$ on the necessary fraction of $C$ by an amount $2^{-d\ell-1}$ or rise it if it is too low.
\end{enumerate}
The first step  does not change the average and we have,
$$\|\pi^{(\delta)}\phi-\phi\|_\infty\le \|\nabla \phi\|_\infty 2^{\ell-k}\sqrt{d}.$$

In the second and third steps we shift the value by at most $2^{-d\ell-1}$. Putting these things together we get the proof of the lemma.
\end{proof}
\bigskip

\bigskip

\noindent {\bf Proof of Proposition \ref{propos1}.}
Let us pick a coarse-grained configuration $\sigma_{\delta}\in \Omega_{L,\gamma}^{(\delta)}$ such that
$$\frac 1 {L^d}\int_{\mathcal{T}_L} d\/x \sigma_{\delta}(x)= \frac{[mN]}N.$$
Then 
$$Z_{\beta,\gamma,L,m}P(\sigma_{\delta})=\sum_{\sigma\in E(\sigma_{\delta})}\text{\rm e}^{-\beta \gamma^{-d}H(\sigma)}
$$
We will make use of the parameter 
\begin{equation}\label{delta}
\delta= \frac{k-\ell}{k}
\end{equation}
and define
\begin{equation}\label{psi}
\psi(\delta)= cL^d\gamma^{-d}\Big\{\beta\gamma^\delta +\gamma^{2d(1-\delta)}\log(\gamma^{-1})\Big \}
\end{equation}
By Lemma \ref{lemma1} and Lemma \ref{lemma3}, we get:
$$\text{\rm e}^{-\beta \gamma^{-d}\mathcal{F}(\sigma_{\delta})-\psi(\delta)}\le\sum_{\sigma\in E(\sigma_{\delta})}\text{\rm e}^{-\beta \gamma^{-d}H(\sigma)}\le \text{\rm e}^{-\beta \gamma^{-d}\mathcal{F}(\sigma_{\delta})+\psi(\delta)}$$

We have to consider the partition function. In fact, the essential part of it comes from the microscopic configurations whose coarse-grained configuration corresponds to the minimizer of the free  energy.

Let $\sigma^\star$ be the minimizer of the free energy under the mass constraint and note that it fulfills the conditions required for the application of Lemma \ref{lemma4}. Indeed one gets immediately that the minimizer is of bounded variation, so that $\|\nabla \sigma^\star\|_1$ is bounded. Moreover, by using the Euler Lagrange equation and a bootstrap argument, it can be easily proved that $\|\nabla \sigma^\star\|_\infty$ is bounded.

Let $\sigma_{\delta}^*$ be the corresponding coarse-grained configuration provided by Lem\-ma \ref{lemma4}. Then
$$Z_{\beta,\gamma,L,m}\ge \text{\rm e}^{-\beta \gamma^{-d}\mathcal{F}(\sigma_{\delta}^*)-\psi(\delta)}.$$
Similarly, we have 
$$Z_{\beta,\gamma,L,m}\le \text{\rm e}^{-\beta \gamma^{-d}\mathcal{F}(\sigma_{\delta}^*)+\psi(\delta)}N_\ell.$$
By Lemma \ref{lemma2}
$$Z_{\beta,\gamma,L,m}\le \text{\rm e}^{-\beta \gamma^{-d}\mathcal{F}(\sigma_{\delta}^*)+\phi(\delta)}$$
with
$$\phi(\delta)=cL^d\gamma^{-d}[\gamma^d+\gamma^{d(1-\delta)}\log\gamma^{-1}].$$
In conclusion,
$$-\gamma^{-d}\beta [\mathcal{F} (\sigma_{\delta})-\mathcal{F} (\sigma_{\delta}^*)] -\phi(\delta)\le\log P(\sigma_{\delta})\le -\gamma^{-d}\beta [\mathcal{F} (\sigma_{\delta})-\mathcal{F} (\sigma_{\delta}^*)] +\psi(\delta)$$
We now replace $\mathcal{F}(\sigma_{\delta}^*)$ by $\mathcal{F}(\sigma^\star)$ in the estimates. For the upper bound, it is enough to use
$$ \mathcal{F}(\sigma_{\delta}^*)\ge \mathcal{F}(\sigma^\star).$$
For the lower bound instead we use Theorem \ref{nearlip}:
$$ |\mathcal{F}(\sigma_{\delta}^*)-\mathcal{F}(\sigma^\star)|\le c\gamma^{-d}L^{d}|\sigma_{\delta}^*-\sigma^\star|_\infty\log|\sigma_{\delta}^*-\sigma^\star|_\infty
$$
By Lemma \ref{lemma4} then we obtain 
$$ |\mathcal{F}(\sigma_{\delta}^*)-\mathcal{F}(\sigma^\star)|
\le c\gamma^{-d}L^{d} [\gamma^\delta +\gamma^{d(1-\delta)}]\log\gamma^{-1}$$
for a suitable constant $c$ depending on $\|\nabla \sigma^\star\|_\infty $.

The final estimate  is 
$$-\gamma^{-d}\beta [\mathcal{F} (\sigma_{\delta})-\mathcal{F} (\sigma^\star)] -c\gamma^{-d} L^{d} [\gamma^\delta +\gamma^{d(1-\delta)}\log\gamma^{-1}]
\le\log P(\sigma_{\delta})$$
$$\le -\gamma^{-d}\beta [\mathcal{F} (\sigma_{\delta})-\mathcal{F} (\sigma^\star)] +c\gamma^{-d}L^{d}    [\gamma^\delta +\gamma^{2d(1-\delta)}\log\gamma^{-1}]$$
\qed


\section{Concluding remarks}\label{Sect4}
\subsection{About the shape problem}

In the analysis in \cite{CCELM0} that leads to the crucial estimate (\ref{bound69}),  use was made of the Riesz Rearrangement Inequality for convolutions.
For any  measurable function $m$ on $\R^d$ such that the Lebesgue measure $\{x :\ m(x) > \lambda\}$ tends to $0$ and $\lambda$ tends to $\infty$,
let $m^*$ denote the radial function on $\R^d$ such that for all $\lambda>0$, the sets $\{x\ :\ m(x) > \lambda\}$ and 
$\{x\ :\ m^*(x) > \lambda\}$ have the same Lebesgue measure. (If the measure is infinite, the second set is all of $\R^d$.) Then the  Riesz Rearrangement Inequality says that
\begin{equation}\label{Reisz}  \int_{\R^2}\int_{\R^2} | m(x) -  m(y)|^2J(x-y)\dd x\dd y \geq
 \int_{\R^2}\int_{\R^2} | m^*(x) -  m^*(y)|^2J(x-y)\dd x\dd y\ .
 \end{equation}
In order to apply this to our profiles on the torus, we need to make some modifications of the profiles, ``lifting'' from ${\cal T}_L$ to $\R^d$, without significantly affecting the value of ${\mathcal F}$, as explained in
\cite{CCELM0}.  

Once this is done,   the rearrangement operation lowers the value of the free energy functional -- because of the Riesz Rearrangement Inequality -- and it makes the trial function $m$ radial. This facilitates the estimation of  ${\mathcal F}(m)$, and leads to (\ref{bound69}), but in the process, all information about the shape of the set $C(\sigma)$
defined in (\ref{ABC}), i.e., the droplet, is lost. 

To solve the shape problem, we would like to know, quantitatively, {\em how much} the rearrangement operation lowers the free energy for profiles $m$
in which the droplet is not nearly spherical. In purely mathematical terms, the question to be answered is this: Let $A$ be a measurable set in $\R^d$
with finite Lebesgue measure. Let $B$ be the ball with the same Lebesgue measure as $A$.  The {\em Fraenkel asymmetry} $F(A)$ is defined by
$$F(A) = \inf_{y\in \R^d}\left\{ \int_{\R^d} \left|1_A(x) - 1_B(x-y)\right|{\rm d}x \ \right\}\ .$$
It measures ``how out of round the shape of $A$ is''.

One would then like to have an explicit lower bound on 
\begin{equation}\label{Reisz1}  \int_{\R^2}\int_{\R^2} | 1_A(x) -  1_A(y)|^2J(x-y)\dd x\dd y -
 \int_{\R^2}\int_{\R^2} | 1_B(x) -  1_B(y)|^2J(x-y)\dd x\dd y\ 
 \end{equation}
 in terms of $F(A)$ and $J$.  This may be seen as a ``non-local isoperimetric inequality''.  Indeed, if $J$ is supported in a ball of radius $r>0$,
 then $| 1_A(x) -  1_A(y)| =0$ unless both $x$ and $y$ are within a distance $r$ of the boundary of $A$. 

\subsection{Local interactions}

We mention here that, as noted in \cite{CCELM0}, we expect the results derived there and here to apply to more general forms of $f(m)$ than that given in (\ref{free-energy-funct}). More precisely, the $\beta^{-1}i(m)$ in (\ref{free-energy-funct}) is the free energy of a lattice gas without any short range interaction. However the result should also be valid when, in addition to the Kac potential, also there are short range interactions, i.e. ones which do not scale with $\gamma$, as long as we are at values of $\beta$ where these do not, by themselves, produce a phase transition. This corresponds to replace, in the free energy functional, $\beta^{-1}i(m)$ by a strictly convex function $f_0(\beta,m)$, the free energy of the ``reference system'', as in references \cite{GP} and \cite{LP}. The technical problem in considering such system are the estimates as in Lemma \ref{lemma1}. To obtain such estimates for systems with short range interactions requires an estimate of the finite size corrections to $f_0(\beta,m)$. We know that in the canonical ensemble they go to zero as $\gamma\to 0$. One expects that they behave like the ratio of surface area to volume, i.e. of $O(\gamma L^{-1})$. It is an open problem to prove that they go as a suitably small power of $\gamma L^{-1}$. More is known in the grand-canonical ensemble for $\beta$ small and we hope that the same is true for the canonical case, whenever $f_0(\beta,m)$.

\medskip\medskip\medskip
\centerline{\bf Acknowledgments}
\medskip

\noindent We thank Errico Presutti for very helpful discussions. E. C., R. E. and R. M. acknowledge the kind hospitality of the Institute for Mathematical Science on the National Singapore University, where part of this work was done.

\medskip

\medskip


\begin{thebibliography}{99}

\bibitem{ABCP} Alberti G., Bellettini G.,  Cassandro M. and
Presutti  E., {\it Surface
Tension in Ising Systems with kac Potentials}, J. Statist. Phys, {\bf 82},
743--796 (1996).

\bibitem{BCP} Bertini L, Cancrini N. and Posta G. {\it On the dynamical behavior of the ABC model}, arXiv:1104.0822

%

\bibitem{CCELM0}  Carlen E. A., Carvalho M. C.,    Esposito R.,  Lebowitz J. L. and  Marra R., {\it
Droplet minimizers for the Gates-Lebowitz-Penrose free energy functional}, Nonlinearity {\bf 22},  2919--2952
(2009)


\bibitem{GP} Gates D. J. and Penrose O., {\it The van der Waals
limit for classical systems.
I. A variational principle.} Commun. Math. Phys. {\bf 15} 255--276 (1969).
 
  
\bibitem{LP} Lebowitz J.L. , Penrose  O.,
{\it  Rigorous treatment of the Van der Waals
Maxwell theory of the liquid vapor transition}
   J. Math. Phys., {\bf 7},  98,
     (1966)
     
     
%

\bibitem{P}  Presutti E.,  {Scaling Limits in Statistical Mechanics and Microstructures in Continuum Mechanics}, Springer (2009)
\end{thebibliography}
\end{document}